\newtheorem{theorem}{Theorem}
\def\m#1{\mathsf{#1}} 
\DeclareMathOperator{\MinimalPoly}{\bf MinimalPoly}
\newcommand{\cl}[2]{\ensuremath{\mathit{Cl}_{#1,#2}}}
\DeclareMathOperator{\Det}{Det} 
\DeclareMathOperator{\Tr}{Tr}
\DeclareMathOperator{\arcsinh}{arcsinh}
\newcommand{\bbR}{\ensuremath{\mathbb{R}}}
\newcommand{\bbC}{\ensuremath{\mathbb{C}}}
\newcommand{\reverse}[1]{\widetilde{#1}}
\newcommand{\gradeinverse}[1]{\widehat{#1}}
\newcommand{\ii}{\mathrm{i}}
\newcommand{\ee}{\mathrm{e}} 
\def\m#1{\mathsf{#1}}
\def\e#1{\mathbf{e}_{#1}} 
\newcommand{\mycomment}[1]{} 
\begin{document}

\title[The characteristic polynomial in calculation of exponential]
{The characteristic polynomial in calculation of exponential and
elementary functions in Clifford algebras}

\author[1]{Art{\=u}ras Acus${{}^1}$}

\address[1]{Institute of Theoretical Physics and Astronomy, 
University, Saul{\.e}tekio 3, LT-10257 Vilnius, Lithuania}

\author[2]{Adolfas Dargys${{}^2}$}

\address[2]{Center for Physical Sciences and Technology, Semiconductor
Physics Institute, Saul{\.e}tekio 3, LT-10257
Vilnius, Lithuania}

%
%
%
%
%

%







\begin{abstract}
Formulas to calculate multivector exponentials in a base-free representation and in a orthonormal basis
are presented for an arbitrary Clifford geometric algebra
\cl{p}{q}. The formulas are based on the analysis of roots of
characteristic polynomial of a multivector. Elaborate
examples how to use the formulas in practice are presented. The
results may be useful in the quantum circuits or in the problems
of analysis of  evolution of the entangled quantum states.
\end{abstract}

\maketitle


\section{Introduction and notation}\label{sec:notations}
In the Clifford  geometric algebra \cl{p}{q} the exponential
functions of multivector (MV) are frequently
encountered~\cite{Gurlebeck1997,Lounesto1997,Marchuk2020}. It is
enough to mention a famous  half-angle rotor in the geometric
algebra (GA) that finds a wide application from simple rotations
of animated drawings up to applications in the 4-dimensional
relativity theory. Two kinds of exponential  rotors exist which
are related to trigonometric and hyperbolic functions. In higher
dimensional GAs a mixing of trigonometric and hyperbolic functions
was found~\cite{Dargys2022a,AcusDargys2022b}. In this paper the
method to calculate the exponential functions  of a general
multivector {MV} in an arbitrary GA is presented applying for this
purpose the characteristic polynomial in a form of MV. In
Sec.~\ref{sec:charpoly} the methods to generate characteristic
polynomials  in  \cl{p}{q} algebras characterized by arbitrary
signature $\{p,q\}$ and dimension $n=p+q$  are discussed. The
method of calculation of exponential is presented in
Sec.\ref{sec:CLpq}. In Sec.~\ref{sec:otherfunctions} we
demonstrate that the obtained GA exponentials may be used to find
GA elementary and special functions. Below, the notation used in
the paper is described briefly.

In the orthonormalized basis used here the geometric product of
basis vectors $\e{i}$ and $\e{j}$ satisfies the anti-commutation
relation, $\e{i}\e{j}+\e{j}\e{i}=\pm 2\delta_{ij}$. The number of
subscripts indicates the grade. The scalar has no index and is a
grade-0 element, the vectors $\e{i}$ are the grade-1 elements, the
bivectors $\e{ij}$ are grade-2 elements, etc. For mixed signature 
\cl{p}{q} algebra the squares of basis vectors, correspondingly,
are $\e{i}^2=+1$ and $\e{j}^2=-1$, where $i=1,2,\ldots, p$ and
$j=p+1,\ldots, p+q$. The sum
$n=p+q$ is the dimension of the vector space. The general MV
is expressed as
\begin{equation}\label{mvA}
  \m{A}=a_0+\sum_{i}a_i\e{i}+\sum_{i<j}a_{ij}\e{ij}+\sum_{i<j<k}a_{ijk}\e{ijk}+\cdots+\sum_{i<j<\cdots < n=p+q}a_{\underbrace{ij\cdots n}_{p+q=n}} \e{ij\cdots n}
  =a_0+\sum_J^{2^n-1}a_J\e{J},
\end{equation}
where $a_i$, $a_{ij\cdots}$ are the real coefficients. Since the
calculations have been done by GA package written in
\textit{Mathematica} language in the following it appears
convenient  to write the basis elements $\e{ij\cdots}$ and indices
in the reverse degree lexicographic ordering, accordingly. For
example, when $p+q=3$ then the basis elements and indices are
listed in the order $\{1,\e{1},\e{2},\e{3},\e{12},\e{13},
\e{23},\e{123}\equiv I\}$, i.e., the  basis element numbering
always increases from left to right. During computation this
ordering is generated by \textit{Mathematica} program
automatically. This is important because swapping of two adjacent
indices changes basis element sign. The ordered set of indices
will be denoted by single capital letter $J$ referred to as
multi-index the values of which run over already mentioned list
(also see the last expression in~Eq.~\eqref{mvA}). Note, that in
the multi-index representation the scalar is deliberately excluded
from summation as indicated by upper range $2^n-1$  in the sum in
the last expression. The convention is useful since in the
function formulas below the scalar term always has a simpler form.
The highest degree element (pseudoscalar) will be denoted as $I$,
and the corresponding coefficient as $a_I$.

We shall need three grade involutions: the  reversion (e.g.,
$\reverse{\e{12}}=\e{21}=-\e{12}$ ), the inverse (e.g.,
$\gradeinverse{\e{123}}=(-\e{1})(-\e{2})(-\e{3})=-\e{123}$) and
their combination called Clifford conjugation
$\widetilde{\widehat{\e{123}}}=\widehat{\e{321}}=-\e{321}=\e{123}$).
Also we shall need the Hermitian conjugate MV $\m{A}^\dagger$ and
non-zero grade negation operation denoted by overline
$\overline{\m{A}}$. The MV Hermitian conjugation expressed for
basis elements $\e{J}$ in both  real and complex GAs can be
written as~\cite{Marchuk2020,Shirokov2018a}
\begin{equation}\label{hermitian}
  \m{A}^{\dagger}=a_0^*+a_1^*\e{1}^{-1}+\dots+a_{12}^*\e{12}^{-1}
+\dots+a_{123}^*\e{123}^{-1}\dots=\sum_Ja_J^*\e{J}^{-1},
\end{equation}
where $a_J^*$ s the complex conjugated $J$-th
coefficient.\footnote{There is a simple trick to find
$\e{J}^{-1}$. Formally raise all indices and then lower them down
but now taking into account the considered algebra signature
$\{p,q\}$. Finally, apply the reversion. For example, in \cl{0}{3}
we have
$\e{123}\to\e{}^{123}\to-\e{123}\to-\widetilde{\e{123}}\to\e{123}$.
Thus, $\e{123}^{\dagger}=\e{123}$.} For each multi-index that
represents the basis vector with $\e{J}^2=+1$ the Hermitian
conjugation does nothing but changes signs if $\e{J}^2=-1$.
Therefore, the basis elements $\e{J}$ and $\e{J}^\dagger$ can
differ by sign only. The non-zero grade negation is an operation
that changes signs of all grades  to opposite except the scalar,
i.e. $\overline{\m{A}}= a_0-\sum_J^{2^n-1}a_J\e{J}$.

\section{MV characteristic polynomial and equation}
\label{sec:charpoly}

The algorithm to calculate the exponential and associated
functions presented below is based on a characteristic polynomial.
There is a number of methods adapted to MVs, for example, based on
MV determinant, recursive Faddeev-LeVerrier method adapted to GA
and the method related to Bell
polynomials~\cite{Helmstetter2019,Shirokov2021,Abdulkhaev2021}. In
this section the methods  are briefly summarized.

Every MV $\m{A}\in\cl{p}{q}$ has a characteristic polynomial
$\chi_{\m{A}}(\lambda)$ of degree $2^{d}$ in $\bbR$, where
$d=2^{\lceil\tfrac{n}{2}\rceil}$ is the integer, $n=p+q$. In
particular, $d=2^{n/2}$ if $n$ is even and $d=2^{(n+1)/2}$ if $n$
is odd. The integer $d$ may be also interpreted as a dimension of
real or  complex matrix representation of  Clifford algebra in the
8-fold periodicity table~\cite{Lounesto1997}. The characteristic
polynomial that follows from determinant of a MV $\m{A}$ is
defined by
\begin{equation}\label{CharPolyDef}
  \chi_{\m{A}}(\lambda)=-\Det(\lambda -\m{A})=\sum_{k=0}^{d} C_{(d-k)}(\m{A})\, \lambda ^k=
   C_{(0)}(\m{A})\lambda^d+C_{(1)}(\m{A})\lambda^{d-1}+\cdots+C_{(d-1)}(\m{A})\lambda+C_{(d)}(\m{A})\,.
\end{equation}
The variable in the characteristic polynomial  will be always
denoted by $\lambda$ and the roots of $\chi_{\m{A}}(\lambda)=0$
(called the characteristic equation) by $\lambda_i$, respectively.
$C_{(k)}\equiv C_{(k)}(\m{A})$ are the real coefficients defined for the MV $\m{A}$. The coefficient at the highest power
of variable $\lambda$ is always assumed $C_{(0)}=-1$. The
coefficient $C_{(1)}(\m{A})$ is connected with MV trace,
$C_{(1)}(\m{A})=\Tr(\m{A})=d \left\langle\m{A}\right\rangle_{0}$.
The coefficient $C_{(d)}(\m{A})$ is related to MV determinant
$\Det\m{A}=-C_{(d)}(\m{A})$.

Table~\ref{tableDet}   shows  how the MV determinant (which is a
real number) can be calculated in low dimension GAs, $n\le6$. This
table also may be used to find  the coefficients $C_{(k)}(\m{A})$
in the characteristic polynomial~\eqref{CharPolyDef}. For a
concrete algebra it is enough to replace $\m{A}$'s in the
Table~\ref{tableDet} by $(\lambda-\m{A})$. For
example, in case of Hamilton quaternion algebra \cl{0}{2} we have
$\m{A}=a_0+a_1\e{1}+a_2\e{2}+a_{12}\e{12}$ and
$\overline{\m{A}}=a_0-a_1\e{1}-a_2\e{2}-a_{12}\e{12}$, then
\begin{equation}
 \chi_{\m{A}}(\lambda)=-\Det(\lambda
-\m{A})=-(\lambda-\m{A})(\lambda-\overline{\m{A}})=-(a_0^2+a_1^2+a_2^2+a_{12}^2)+2a_0\lambda-\lambda^2.
\end{equation}
Thus, $C_{(0)}=-1$, $C_{(1)}=2 a_0=\Tr\m{A}$ and
$C_{(2)}=-(a_0^2+a_1^2+a_2^2+a_{12}^2)=-\Det\m{A}$  which is in
accord with that calculated from quaternion matrix representation
A=\begin{math}\bigl(\begin{smallmatrix}a_0+\ii a_1&a_2+\ii
a_{12}\\-a_2+\ii a_{12}& a_0-\ii
a_1\end{smallmatrix}\bigr)\end{math}. The table can be also used
to find values of other coefficients $C_{(k)}(\m{A})$ of
polynomial~\eqref{CharPolyDef}. To this end it is enough to
replace $\Det(\m{A})$ in Table~\ref{tableDet} by
$\Det(\lambda-\m{A})$ and then recursively differentiate
 with respect to $\lambda$ a proper number of times,
\begin{equation}
  C_{(k-1)}(\m{A})= -\frac{1}{d - (k - 1)} \left.\frac{\partial C_{(k)}(\lambda-\m{A})}{\partial \lambda}\right|_{\lambda =0}\qquad k=d,\ldots, 1\,,
\end{equation}
which is a straightforward method to obtain the coefficient at
$\lambda^{d-k}$ for any polynomial.

 In Faddeev-Leverrier method and its
modifications~\cite{Helmstetter2019,Householder1975,Hou1998,Shirokov2020a}
the coefficients $C_{(k)}(\m{A})$ in
polynomial~\eqref{CharPolyDef} are calculated recursively,
beginning from $C_{(1)}(\m{A})$ and ending with $C_{(d)}(\m{A})$.
We start from multivector $\m{A}_{(1)}$ by setting
$\m{A}_{(1)}=\m{A}$. Then compute the coefficient
$C_{(k)}(\m{A})=\frac{d}{k}\langle \m{A}_{(k)} \rangle_{0}$ and in
the next step the new MV $\m{A}_{(k+1)}=\m{A}
\bigl(\m{A}_{(k)}-C_{(k)}\bigr)$:
\begin{equation}\label{FLAlg}
  \begin{array}{rcl}
 \m{A}_{(1)}=\m{A}&\rightarrow&C_{(1)}(\m{A})=\frac{d}{1}\langle \m{A}_{(1)}\rangle_0,\\
 \m{A}_{(2)}=\m{A}\bigl(\m{A}_{(1)}-C_{(1)}\bigr)
&\rightarrow&C_{(2)}(\m{A})=\frac{d}{2}\langle \m{A}_{(2)}\rangle_0,\\
    &\vdots&\\
 \m{A}_{(d)}=\m{A}\bigl(\m{A}_{(d-1)}-C_{(d-1)}\bigr)
&\rightarrow&C_{(d)}(\m{A})=\frac{d}{d}\langle
\m{A}_{(d-1)}\rangle_0.
\end{array}
\end{equation}
The determinant of MV then is
$\Det(\m{A})=\m{A}_{(d)}=-C_{(d)}=\m{A}
\bigl(\m{A}_{(d-1)}-C_{(d-1)}\bigr)$. This algorithm if adapted to
GA allows to compute characteristic polynomials for MV of
arbitrary algebra $\cl{p}{q}$. In  alternative recursive
method~\cite{Helmstetter2019} one starts from $C_{(0)}^{\prime}=1$
rather\footnote{This means that all characteristic coefficients
computed with this formula are of opposite sign than obtained by
\eqref{FLAlg}, i.e. $C_{(k)}^{\prime}=-C_{(k)}$. } then
$C_{(0)}(\m{A})=-1$ and initial MV $\m{B}_0=1$, and uses the
following iterative procedure,
 \begin{equation} C_{(k)}^{\prime}=-\text{Tr}(\m{A}\m{B}_{k-1})/k,\qquad
\m{B}_k=\m{A}\m{B}_{k-1}+ C_{(k)}^{\prime},\quad k=1,2,\dotsm, d.
 \end{equation}
The trace may be calculated after multiplication of MVs and taking
the scalar part of the result, $\Tr(\m{A}\m{B}_{k-1})=d \left\langle\m{A}\m{B}_{k-1}\right\rangle_{0}$, or using the trace formula for
products of MVs~\cite{Shirokov2021,Abdulkhaev2021}.
\begin{table}
\begin{center}
  \begin{tabular}{cc}
    $\cl{p}{q}$& $\Det(\m{A})$ \\\hline
\rule{0pt}{3ex}
$p+q=1,2$&$\m{A}\bar{\m{A}}$\\
$p+q=3,4$ &$
    \frac{1}{3}\bigl(\m{A}\m{A} \overline{\m{A}\m{A}}+2 \m{A}\overline{\bar{\m{A}}\overline{\bar{\m{A}}\bar{\m{A}}}} \bigr)$
\\
$p+q=5,6$ &$
    \frac{1}{3}\bigl(\m{H}\m{H} \overline{\m{H}\m{H}}+2 \m{H}\overline{\bar{\m{H}}\overline{\bar{\m{H}}\bar{\m{H}}}} \bigr)\qquad\textrm{with}\quad H=\m{A}\reverse{\m{A}}$
\end{tabular}
\end{center}
  \caption{Optimized expressions for determinant of MV $\m{A}$ in
low dimensional GAs $n=p+q\le 6$. The overbar denotes a negation
of all grades except the scalar,
$\overline{\m{A}}:=\langle\m{A}\rangle_{0}-\sum_{k=1}^n
\langle\m{A}\rangle_{k}=2\langle\m{A}\rangle_{0}-\m{A}$.\label{tableDet}}
\end{table}

Also the coefficients of characteristic polynomial can be deduced
from  complete Bell polynomials. In this approach a set of scalars
is used~\cite{Shirokov2021,Abdulkhaev2021},
\begin{eqnarray}
  S_{(k)}(\m{A}):= (-1)^{k-1}d (k-1)! \langle \m{A}^k \rangle_0,\qquad k=1, \ldots, d,
\end{eqnarray}
where $\langle \m{A}^k \rangle_0$ is the scalar part of MV raised
to  $k$ power.  The needed coefficients are given by
\begin{equation}
C_{(0)}(\m{A})=-1;\qquad
C_{(k)}(\m{A})=\frac{(-1)^{k+1}}{k!}B_k(S_{(1)}(\m{A}),
S_{(2)}(\m{A}), S_{(3)}(\m{A}), \ldots, S_{(k)}(\m{A})),\qquad
k=1, \ldots, d,
\end{equation}
where  $B_k(x_1, \ldots, x_k)$ are the complete Bell polynomials,
The first Bell polynomials\footnote{\textit{Mathematica} v.10 already has function for partial Bell
polynomials BellY[\,]. The Bell  Complete Polynomial then can be
computed as BellCP[x\_\text{List}]:= Sum[BellY[Length[x], k, x],
\{k,1,Length[x]\}], where x\_\text{List} is a list of variables
$x_i$.} are defined by relations
\begin{equation}\begin{split}
B_0&=1,\quad B_1(x_1)=B_0x_1=x_1,\\
B_2(x_1,x_2)&=B_1x_1+B_0x_2=x_1^2+x_2,\\
B_3(x_1,x_2,x_3)&=B_2x_1+\tbinom{2}{1}B_1x_2+B_0x_4=x_1^3+3x_1x_2+x_3,\\
B_4(x_1,x_2,x_3,x_4)&=B_3x_1+\tbinom{3}{1}B_2x_2+\tbinom{3}{2}B_1x_3+B_0x_1=x_1^4+6x_1^2x_2+4
x_1x_3+3x_2^2+x_4,\\
B_5(x_1,x_2,x_3,x_4,x_5)&=B_4x_1+\tbinom{4}{1}B_3x_2+\tbinom{4}{2}B_2x_3+\tbinom{4}{3}B_1x_4+B_0x_5
=x_1^4+6x_1^2x_2+4 x_1x_3+3x_2^2+x_4,
\end{split}\end{equation}
where $\tbinom{n}{r}$ is the binomial coefficient. This sequence
can be easily extended to higher orders. The complete Bell
polynomials also can be represented in a form matrix
determinant~\cite{Wikipedia2022}.

The coefficients of characteristic equation satisfy the following
properties
\begin{align}\label{propertyDC}
  \frac{\partial C_{(k)}(t \m{A})}{\partial t}=k t^{k-1}  C_{(k)}(t \m{A}),\qquad
  \frac{\partial C_{(1)}(t \m{A}^k)}{\partial t}&=k t^{k-1}  C_{(1)}(t \m{A}^k),
\end{align}
where $t$ is a scalar parameter. We will utilize these properties
when proving the exponential formula.

Since provided below formulas contain the sums over roots of
characteristic polynomial, it is worth to remind generalized
Vi\`{e}te's formulas that relate coefficients of characteristic
polynomial to specific sums over the  roots $r_i$:
\begin{align}
  &r_{1}+r_{2}+\cdots+r_{d-1}+r_{d}=(-1)^1\frac{C_{(1)}}{C_{(0)}} \\
  & \left(r_{1} r_{2}+r_{1} r_{3}+\cdots+r_{1} r_{d}\right)+\left(r_{2} r_{3}+r_{2} r_{4}+\cdots+r_{2} r_{d}\right)+\cdots+r_{d-1} r_{d}=(-1)^2\frac{C_{(2)}}{C_{(0)}} \\
&\quad \vdots\notag \\
  &  r_{1} r_{2} \ldots r_{d}=(-1)^{d} \frac{C_{(d)}}{C_{(0)}} .
\end{align}

The other interesting identity~\cite{Hou1998}, which is important for integral
representation of functions is
\begin{equation}
  \Tr{\mathcal{L}\bigl(\ee^{t\m{A}}\bigr)}=\frac{\chi_{\m{A}}^\prime(\lambda)}{\chi_{\m{A}}(\lambda)},
\end{equation}
where $\mathcal{L}$ denotes Laplace transform
$\mathcal{L}\bigl(\ee^{t\m{A}}\bigr)=\bigl(\lambda -\m{A}
\bigr)^{-1}$ of MV $\m{A}$ and $\chi_{\m{A}}^\prime(\lambda)$ is
derivative of the characteristic polynomial
$\chi_{\m{A}}(\lambda)$~(see \eqref{CharPolyDef}) with respect to
polynomial variable $\lambda$.

In matrix theory very important polynomial is a minimal polynomial
$\mu_A(\lambda)$. It establishes the conditions of
diagonalizability of matrix $A$. Similar polynomial
$\mu_{\m{A}}(\lambda)$ may be defined for MV. In particular, it is well-known that
matrix is diagonalizable (aka nondefective) if and only if the
minimal polynomial of the matrix does not have multiple roots, i.e.,
when the minimal polynomial is a product of distinct linear
factors. It is also well-known that the minimal polynomial divides
the characteristic polynomial. If roots of
characteristic equation are all different, then matrix/MV is
diagonalizable. The converse, unfortunately, is not true, i.e.,
MV, the characteristic polynomial of which has multiple roots, may
be diagonalizable.  It is also
established~\cite{Wikipedia2022Diagonalizabitiy} that in case of
matrices over the complex numbers ${\mathbb C}$, almost every
matrix is diagonalizable, i.e., the set of complex $d\times d$
matrices that are not diagonalizable over ${\mathbb C}$ --
considered as a subset of ${\mathbb C}^{d\times d}$-- has the
Lebesgue measure zero. An algorithm how to compute minimal
polynomial of MV without doing reference to matrix representation
of the MV is given in Appendix~\ref{sec:appendix}.

\section{MV exponentials in \cl{\lowercase{p}}{\lowercase{q}} algebra}
\label{sec:CLpq}
\subsection{Exponential of MV in coordinate (orthogonal basis) form}

\begin{theorem}[Exponential in coordinate form]
The exponential of a general $\cl{p}{q}$ MV $\m{A}$ given by Eq.~\eqref{mvA} 
 is the multivector
\begin{align}\label{expNcomplexCoord}
\exp(\m{A})=
  &\frac{1}{d}
  \sum_{i=1}^{d} \exp(\lambda_i)\Biggl(1+
   \sum_{J}^{2^n-1}  \e{J} \,
  \frac{\sum_{m=0}^{d-2} \lambda_i^m \sum_{k=0}^{d-m-2}C_{(k)}(\m{A})\, C_{(1)}(\e{J}^\dagger\m{A}^{d-k-m-1})}{\sum_{k=0}^{d-1}(k+1)\,C_{(d - k-1)}(\m{A})\, \lambda_i^{k}}
 \Biggr)\\=
&\frac{1}{d}
  \sum_{i=1}^{d} \exp\bigl(\lambda_i\bigr)\Bigl(1+\sum_{J}^{2^n-1}  \e{J}
\,b_J(\lambda_i)\Bigr),
 \qquad b_J(\lambda_i)\in\bbR,\bbC\,.
\end{align}
Here $\lambda_i$ and $\lambda_i^j$  denotes, respectively, the
root of a characteristic equation  and the root raised to  power
$j$. The sum is over all roots $\lambda_i$ of characteristic
equation $\chi_{\m{A}}(\lambda)=0$, where $\chi_{\m{A}}(\lambda)$
is the characteristic polynomial  of MV $\m{A}$  expressed as
$\chi_{\m{A}}(\lambda) =\sum_{i=0}^{d} C_{(d-i)}(\m{A})\, \lambda
^i$, and where $C_{(d-i)}(\m{A})$ are the coefficients at variable
$\lambda$ raised to  power~$i$. The symbol
  $C_{(1)}(\e{J}^\dagger\m{A}^{k})=d\, \langle\e{J}^\dagger\m{A}^{k}\rangle_0 $
denotes the first coefficient (the coefficient at $\lambda^{d-1}$)
in the characteristic polynomial that consists of geometric
product of the hermitian conjugate basis element $\e{J}^\dagger$
and $k$-th power of  initial MV:
$\e{J}^\dagger\m{A}^{k}=\e{J}^\dagger\underbrace{\m{A}\m{A}\cdots\m{A}}_{k\
\text{terms}}$.

Note, because the roots of characteristic equation in general are
the complex numbers, the individual terms in sums are complex.
However, the result $\exp(\m{A})$ always simplifies to a real
number (see subsection~\ref{realanswer}).
\end{theorem}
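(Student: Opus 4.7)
The plan is to derive the formula from the multivector analogue of Sylvester's matrix-function formula. Since every MV $\m{A}\in\cl{p}{q}$ satisfies the Cayley--Hamilton identity $\chi_{\m{A}}(\m{A})=0$ (which one can read off the Faddeev--Leverrier recursion~\eqref{FLAlg}, or equivalently transfer from the faithful $d{\times}d$ matrix representation underlying $\Det$), the Euclidean algorithm reduces $\exp(\m{A})=\sum_{n\ge 0}\m{A}^n/n!$ to a polynomial in $\m{A}$ of degree at most $d-1$. When the roots $\lambda_1,\dots,\lambda_d$ of $\chi_{\m{A}}$ are pairwise distinct, the Lagrange interpolant of $\exp$ on the spectrum yields
\begin{equation*}
\exp(\m{A})=\sum_{i=1}^{d}\exp(\lambda_i)\,\frac{q_i(\m{A})}{\chi'_{\m{A}}(\lambda_i)},\qquad q_i(\lambda):=\frac{\chi_{\m{A}}(\lambda)}{\lambda-\lambda_i},
\end{equation*}
and an elementary expansion gives $\chi'_{\m{A}}(\lambda_i)=\sum_{k=0}^{d-1}(k+1)\,C_{(d-k-1)}(\m{A})\,\lambda_i^{k}$, which is precisely the denominator in the theorem.

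Next I would render $q_i(\m{A})$ explicit in the coefficients $C_{(k)}(\m{A})$. Polynomial division of $\chi_{\m{A}}(\lambda)=\sum_{j=0}^{d}C_{(d-j)}(\m{A})\,\lambda^{j}$ by $(\lambda-\lambda_i)$, via $q_i(\lambda)=(\chi_{\m{A}}(\lambda)-\chi_{\m{A}}(\lambda_i))/(\lambda-\lambda_i)$, yields
\begin{equation*}
q_i(\lambda)=\sum_{p=0}^{d-1}\lambda^{p}\sum_{k=0}^{d-1-p}C_{(k)}(\m{A})\,\lambda_i^{d-k-p-1}.
\end{equation*}
To read off each component in the orthonormal multi-index basis, I would use the pairing $\langle X,Y\rangle:=\langle Y^{\dagger}X\rangle_0$, under which $\{1,\e{J}\}_{J\neq\emptyset}$ is orthonormal (a consequence of $\e{J}^{\dagger}=\pm\e{J}^{-1}$ as in~\eqref{hermitian}). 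The coefficient of $\e{J}$ in any MV $\m{X}$ is then $\langle\e{J}^{\dagger}\m{X}\rangle_0=C_{(1)}(\e{J}^{\dagger}\m{X})/d$; applying this to $\m{X}=q_i(\m{A})$, swapping the order of summation, and relabelling $m=d-k-p-1$ reproduces the stated double sum. The truncated upper bounds $m\le d-2$ and $k\le d-m-2$ appear because every $p=0$ contribution vanishes identically ($C_{(1)}(\e{J}^{\dagger})=d\langle\e{J}^{\dagger}\rangle_0=0$ for $J\neq\emptyset$), which removes both the boundary entry $k=d-m-1$ and the entire $m=d-1$ slice.

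The main obstacle is the case of coincident roots, where Lagrange interpolation breaks down and $\chi'_{\m{A}}(\lambda_i)$ vanishes in the displayed denominator. Here I would appeal to the diagonalizability remarks following Table~\ref{tableDet}: the multivectors with repeated roots form a proper algebraic subvariety whose complement is dense, both sides of the claimed identity depend continuously on $\m{A}$ off this subvariety, and the apparent singularity in each summand is removable because the offending factor $(\lambda_i-\lambda_j)$ cancels simultaneously from $q_i$ and from $\chi'_{\m{A}}(\lambda_i)$. Consequently the identity is established generically and extends by continuity to all of $\cl{p}{q}$. Reality of the total, even though individual summands may be complex when roots occur in conjugate pairs, follows because conjugate roots contribute conjugate terms; the details are postponed to subsection~\ref{realanswer}.
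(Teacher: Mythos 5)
Your route---Cayley--Hamilton plus Lagrange/Sylvester interpolation at the roots, followed by coefficient extraction with the pairing $\langle\e{J}^{\dagger}\,\cdot\,\rangle_0$---is genuinely different from the paper's, which only verifies the defining property $\left.\partial_t\exp(\m{A}t)\right|_{t=1}=\m{A}\exp(\m{A})$, reports symbolic checks for $n\le 5$ and numerical ones for $n\le 10$, and defers the general argument to the basis-free Theorem~\ref{theorem2}; in that sense your argument is more substantive for the distinct-root case, and the mechanics are right: the denominator is indeed $\chi'_{\m{A}}(\lambda_i)=\sum_{k=0}^{d-1}(k+1)C_{(d-k-1)}(\m{A})\lambda_i^{k}$, the division identity for $q_i$ is correct, and the bookkeeping that kills the $p=0$ terms and produces the bounds $m\le d-2$, $k\le d-m-2$ checks out. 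However, there is a gap: you only reproduce the $\e{J}$-components for nonscalar $J$, while the theorem also asserts that the grade-$0$ part of the $i$-th summand is $\exp(\lambda_i)/d$. Your interpolation formula gives the scalar part as $\sum_i\exp(\lambda_i)\,\langle q_i(\m{A})\rangle_0/\chi'_{\m{A}}(\lambda_i)$, so you must still prove $\langle q_i(\m{A})\rangle_0=\chi'_{\m{A}}(\lambda_i)/d$. This rests on the trace--power-sum identity $d\langle\m{A}^{p}\rangle_0=\sum_{j=1}^{d}\lambda_j^{p}$ (Newton's identities, i.e.\ precisely the paper's $S_{(k)}$--Bell-polynomial relations, or a faithful matrix representation); granted it, $\langle q_i(\m{A})\rangle_0=\tfrac{1}{d}\sum_j q_i(\lambda_j)=\tfrac{1}{d}\chi'_{\m{A}}(\lambda_i)$ because $q_i(\lambda_j)=0$ for $j\ne i$ and $q_i(\lambda_i)=\chi'_{\m{A}}(\lambda_i)$. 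Nothing in your write-up states or invokes this fact, and without it the ``$1+$'' term of the formula is unproven.

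The treatment of coincident roots also overreaches. You claim the singularity of each summand is removable and that the identity ``extends by continuity to all of $\cl{p}{q}$.'' For non-diagonalizable MVs this is false as stated: the right-hand side then contains divisions by zero that do not cancel (the paper's Example~4 in $\cl{3}{0}$ is exactly such a case, handled there only by perturbing $\m{A}\to\m{A}+\varepsilon\e{1}$ and taking $\varepsilon\to 0$ at the very end), so there is no well-defined expression to which a continuity argument could apply. Even for diagonalizable MVs with repeated characteristic roots (Example~3) the individual terms are of the form $0/0$ and acquire meaning only after the common factors in $\lambda_i$ are cancelled as rational functions before the root value is substituted. So your proof legitimately establishes the theorem when the $\lambda_i$ are pairwise distinct---already more than the paper's own verification-style proof does---but the passage beyond that case needs either the paper's limiting procedure or an explicit restriction of the statement, not a one-line appeal to continuity.
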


\begin{proof}
The proof is based on checking the following defining equation
below for MV exponential presented in
Eq.~\eqref{expNcomplexCoord}
\begin{equation}
\left.\frac{\partial\exp(\m{A}t)}{\partial t}\right|_{t=1} = \m{A}
\exp(\m{A})=\exp(\m{A}) \m{A}, \end{equation} where the MV $\m{A}$
is assumed to be independent of scalar parameter $t$. After
differentiation with respect to $t$ and then setting $t=1$, one
can verify that the result indeed is $\m{A} \exp(\m{A})$. At this
moment we have explicitly checked the
formula~\eqref{expNcomplexCoord} symbolically when $n=p+q\le 5$
and numerically up to $n\le 10$. To establish non contradicting
nature of~\eqref{expNcomplexCoord} for a general $n$ it seems more
appropriate  to resort to coordinate-free or base-free formula
(see Eq.~\eqref{expNcomplexCoordFree}).
\end{proof}

\vspace{3mm}
 \textbf{Example 1.} {\it Exponential of generic MV in \cl{0}{3} with all different roots.}
Let's compute the  exponential of  $\m{A}=8-6 \e{2}-9 \e{3}+5
\e{12}-5 \e{13}+6 \e{23}-4 \e{123}$ with
Eq.~\eqref{expNcomplexCoord}. We find $d=4$. Computation of
coefficients of characteristic polynomial
$\chi_{\m{A}}(\lambda)=C_{(4)}(\m{A})+C_{(3)}(\m{A})
\lambda+C_{(2)}(\m{A})\lambda^2+C_{(1)}(\m{A})
\lambda^3+C_{(0)}(\m{A})\lambda^4$ for MV $\m{A}$ yields
$C_{(0)}(\m{A})=-1$, $C_{(1)}(\m{A})=32$,  $C_{(2)}(\m{A})=-758$,
$C_{(3)}(\m{A})=10432$, $C_{(4)}(\m{A})=-72693$. The
characteristic equation $\chi_{\m{A}}(\lambda)=0$ then becomes
$-72693+10432 \lambda-758 \lambda^2+32 \lambda^3-\lambda^4=0$,
that has four different roots $\lambda_1=12-\ii \sqrt{53}$,
$\lambda_2=12+\ii \sqrt{53}$, $\lambda_3=4-\ii \sqrt{353},
\lambda_4=4+\ii \sqrt{353}$. For every multi-index $J$ and each
root $\lambda_i$ we have to compute coefficients
\begin{equation*}
\begin{split}
  b_J(\lambda_i)&=\frac{-\lambda_i^2 C_{(1)}(\e{J}^\dagger\m{A})+\lambda_i \bigl(32 C_{(1)}(\e{J}^\dagger\m{A})-C_{(1)}(\e{J}^\dagger\m{A}^2)\bigr)-758 C_{(1)}(\e{J}^\dagger\m{A})+32 C_{(1)}(\e{J}^\dagger\m{A}^2)-C_{(1)}(\e{J}^\dagger\m{A}^3)
  }{-4 \lambda_i^3+96 \lambda_i^2-1516 \lambda_i+10432},
\end{split}
\end{equation*}
where  we still have to substitute the coefficients
$C_{(1)}(\e{J}^\dagger\m{A}^k)$
\begin{equation*}
  \begin{array}{l|rrrrrrrr}
    C_{(1)}(\e{J}^\dagger\m{A}^k)   &\e{1}^\dagger&\e{2}^{\dagger}&\e{3}^{\dagger}&\e{12}^{\dagger}&\e{13}^{\dagger}& \e{23}^{\dagger}&\e{123}^{\dagger}\\[2pt]
\hline
    k=1& 0&-24&-36&20&-20&24&-16\\
    k=2&192&-224&-416&32&-128&384&-856\\
    k=3&8208&5952&5508&-11572&7468&888&-7984
\end{array}\,,
\end{equation*}
different for each multi-index $J$.  The Hermite conjugate
elements are $\e{J}^\dagger=\{-\e{1},-\e{2},-\e{3},
-\e{12},-\e{13}, -\e{23},\e{123}\}$. After substituting all
computed quantities into \eqref{expNcomplexCoord} we finally get,
where  $\alpha=\sqrt{53}$ and $\beta=\sqrt{353}$,
\begin{align}
  \exp(\m{A})= &\frac{1}{2} \ee^4 \left(\ee^8\bigl( \cos(\alpha)+\cos
   \bigl(\beta\bigr)\right) +\left(\frac{3}{\alpha} \ee^{12} \sin \bigl(\alpha\bigr)-\frac{3}{\beta} \ee^4 \sin \bigl(\beta\bigr)\right) \e{1}\notag \\
   &+ \left(\frac{-1}{2 \alpha}\ee^{12} \sin \bigl(\alpha\bigr)-\frac{11}{2 \beta} \ee^4 \sin \bigl(\beta\bigr)\right)
   \e{2}
  +\left(-\frac{2}{\alpha} \ee^{12} \sin
   \bigl(\alpha\bigr)-\frac{7}{\beta} \ee^4 \sin \bigl(\beta\bigr)\right) \e{3}
   \notag\\
  &+ \left(-\frac{2}{\alpha} \ee^{12} \sin \bigl(\alpha\bigr)+\frac{7}{\beta} \ee^4 \sin
   \bigl(\beta\bigr)\right) \e{12}
 +\left(\frac{1}{2 \alpha}\ee^{12} \sin \bigl(\alpha\bigr)-\frac{11}{2
   \beta} \ee^4 \sin \bigl(\beta\bigr)\right) \e{13}
   \\
  &+\left(\frac{3}{\alpha} \ee^{12} \sin \bigl(\alpha\bigr)+\frac{3}{\beta} \ee^4 \sin \bigl(\beta\bigr)\right) \e{23} +\frac{1}{2} \ee^4 \left(\cos \bigl(\beta\bigr)-\ee^8
   \cos \bigl(\alpha\bigr)\right) \e{123}
  .\notag
\end{align}
which (after simplification) coincides with our earlier
result~\cite{AcusDargysPreprint2021}.

\vspace{3mm}
 \textbf{Example 2.} {\it Exponential of generic MV in  \cl{4}{2} with different roots.}
Let's compute the exponential of  $\m{A}=2+3 \e{4}+3
\e{26}+\e{1345}-2 \e{12456}+3 \e{123456}$ using
formula~\eqref{expNcomplexCoord}. In this case $d=8$ and
$\chi_{\m{A}}(\lambda)=C_{(8)}(\m{A})+C_{(7)}(\m{A})
\lambda+C_{(6)}(\m{A}) \lambda^2+C_{(5)}(\m{A})
\lambda^3+C_{(4)}(\m{A}) \lambda^4+C_{(3)}(\m{A})
\lambda^5+C_{(2)}(\m{A})\lambda^6+C_{(1)}(\m{A})
\lambda^7+C_{(0)}(\m{A})\lambda^8$. The coefficients of
characteristic polynomial $\chi_{\m{A}}(\lambda)$ are
$C_{(0)}(\m{A})=-1$, $C_{(1)}(\m{A})=16$,  $C_{(2)}(\m{A})=-64$,
$C_{(3)}(\m{A})=16$, $C_{(4)}(\m{A})=32$, $C_{(5)}(\m{A})=-1280$,
$C_{(6)}(\m{A})=20672$,  $C_{(7)}(\m{A})=-42752$,
$C_{(8)}(\m{A})=14336$. The characteristic equation
$\chi_{\m{A}}(\lambda)=0$ is $14336-42752 \lambda+20672
\lambda^2-1280 \lambda^3+32 \lambda^4+16 \lambda^5-64 \lambda^6+16
\lambda^7-\lambda^8=0$. It has eight different roots
$\lambda_1=-4, \lambda_2=2, \lambda_3=5-\ii \sqrt{3},
\lambda_4=5+\ii \sqrt{3},\lambda_5=-1-\ii \sqrt{15},
\lambda_6=-1+\ii \sqrt{15}, \lambda_7=5- \sqrt{21}, \lambda_8=5+
\sqrt{21}$. Then for every multi-index $J$ and each root
$\lambda_i$ we have  to compute the coefficients
\begin{equation*}
\begin{split}
  &b_J(\lambda_i)=\\\Bigl(
  &C_{0}(A) C_{1}(\e{J}^{\dagger} A^{1}) \lambda _i^6+\bigl(C_{1}(A) C_{1}(\e{J}^{\dagger} A^{1})+C_{0}(A) C_{1}(\e{J}^{\dagger} A^{2})\bigr) \lambda _i^5+\bigl(C_{2}(A) C_{1}(\e{J}^{\dagger} A^{1})+C_{1}(A) C_{1}(\e{J}^{\dagger} A^{2})+C_{0}(A) C_{1}(\e{J}^{\dagger} A^{3})\bigr) \lambda _i^4\\
  &+\bigl(C_{3}(A) C_{1}(\e{J}^{\dagger} A^{1})+C_{2}(A) C_{1}(\e{J}^{\dagger} A^{2})+C_{1}(A) C_{1}(\e{J}^{\dagger} A^{3})+C_{0}(A) C_{1}(\e{J}^{\dagger} A^{4})\bigr) \lambda _i^3
  \\
  &+\bigl(C_{4}(A) C_{1}(\e{J}^{\dagger} A^{1})+C_{3}(A) C_{1}(\e{J}^{\dagger} A^{2})+C_{2}(A) C_{1}(\e{J}^{\dagger} A^{3})+C_{1}(A) C_{1}(\e{J}^{\dagger} A^{4})+C_{0}(A) C_{1}(\e{J}^{\dagger} A^{5})\bigr) \lambda _i^2
  \\
  &+\bigl(C_{5}(A) C_{1}(\e{J}^{\dagger} A^{1})+C_{4}(A) C_{1}(\e{J}^{\dagger} A^{2})+C_{3}(A) C_{1}(\e{J}^{\dagger} A^{3})+C_{2}(A) C_{1}(\e{J}^{\dagger} A^{4})+C_{1}(A) C_{1}(\e{J}^{\dagger} A^{5})+C_{0}(A) C_{1}(\e{J}^{\dagger} A^{6})\bigr) \lambda _i
  \\
  &
   +C_{6}(A) C_{1}(\e{J}^{\dagger} A^{1})+C_{5}(A) C_{1}(\e{J}^{\dagger} A^{2})+C_{4}(A) C_{1}(\e{J}^{\dagger} A^{3})+C_{3}(A) C_{1}(\e{J}^{\dagger} A^{4})+C_{2}(A) C_{1}(\e{J}^{\dagger} A^{5})+C_{1}(A) C_{1}(\e{J}^{\dagger} A^{6})\\
  &\qquad  +C_{0}(A) C_{1}(\e{J}^{\dagger} A^{7})
    \Bigr) \Big/\Bigl(
  8 \lambda_i^7 C_{0}(A)+7 \lambda_i^6 C_{1}(A)+6 \lambda_i^5 C_{2}(A)+5 \lambda_i^4 C_{3}(A)+4 \lambda_i^3 C_{4}(A)+3 \lambda_i^2 C_{5}(A)+2 \lambda_i C_{6}(A)+C_{7}(A)
  \Bigr)\,.
\end{split}
\end{equation*}
 The coefficients $C_{(1)}(\e{J}^\dagger\m{A}^k)$ have values
\begin{equation}
\arraycolsep=3.0pt
\begin{array}{l|rrrrrrrrrrr}
   k&\e{4}^\dagger&\e{15}^{\dagger}&\e{26}^{\dagger}&\e{34}^{\dagger}&\e{145}^{\dagger}& \e{246}^{\dagger}&\e{1256}^{\dagger}&\e{1345}^{\dagger}&\e{2346}^{\dagger}&\e{12456}^{\dagger}&\e{123456}^{\dagger}\\[2pt]
 \hline
1& 24 & 0 & 24 & 0 & 0 & 0 & 0 & 8 & 0 & -16 & 24 \\
2& 96 & 0 & 144 & 0 & -96 & -144 & -96 & -112 & 0 & -64 & 48 \\
3& 1200 & 864 & 1008 & -288 & -672 & -1008 & -576 & -672 & 96 & -960 & 672 \\
4& 9792 & 8064 & 8256 & -1152 & -8832 & -10368 & -8064 & -5312 & -2688 & -7808 & 5568 \\
5& 94848 & 80640 & 82944 & -26496 & -81792 & -91008 & -82560 & -42752 & -24960 & -84992 & 46848 \\
6& 859008 & 787968 & 752256 & -294912 & -826368 & -876672 & -797184 & -397824 & -288768 & -817152 & 370176 \\
7& 8221440 & 7628544 & 7243008 & -3059712 & -7972608 & -8163072 & -7531776 & -3403264 & -3028992 & -8024320 &
   3460608
\end{array}\notag
\end{equation}
In Table, not listed coefficients are zeroes. The Hermitian
conjugate basis elements in the inverse degree lexicographical
ordering are
\begin{equation*}
\begin{split}
&\{\e{1},\e{2},\e{3},\e{4},-\e{5},-\e{6},-\e{12},-\e{13},-\e{14},\e{15},\e{16},-\e{23},-\e{24},\e{25},\e{26},-\e{34},\e{35},\e{36},\e{45},\e{46},-\e{56},-\e{123},-\e{1
   24},\e{125},\e{126},  -\e{134},\\
&\e{135},\e{136},\e{145},\e{146},-\e{156},-\e{234},\e{235},\e{236},\e{245},\e{246},-\e{256},\e{345},\e{346},-\e{356},-\e{456},\e{1234},
   -\e{1235},-\e{1236},-\e{1245},-\e{1246},\e{1256},\\
&
   -\e{1345},-\e{1346},\e{1356},\e{1456},-\e{2345},-\e{2346},\e{2356},\e{2456},\e{3456},-\e{12345},-\e{12346},\e{1235
   6},\e{12456},\e{13456},\e{23456},-\e{123456}\}.
\end{split}
\end{equation*}
Substituting all quantities into \eqref{expNcomplexCoord} after
simplification we get
\begin{align}
  \exp(\m{A})&=\textstyle
\frac{1+\ee^6+2 \ee^3 \cos \sqrt{15}+2 \ee^9
   \left(\cos\sqrt{3}+\cosh \sqrt{21}\right)}{8 \ee^4}
+\frac{-175+175 \ee^6+14 \sqrt{15} \ee^3 \sin \sqrt{15}+10
   \sqrt{3} \ee^9 \left(7 \sin \sqrt{3}+5 \sqrt{7} \sinh \sqrt{21}\right)}{840 \ee^4}\e{4}
\notag   \\
&\textstyle
-\frac{1+\ee^6-2 \ee^3 \cos\sqrt{15}+2 \ee^9 \left(\cos \sqrt{3}-\cosh\sqrt{21}\right)}{8 \ee^4}\e{15}
-\frac{1+\ee^6+2 \ee^3 \cos \sqrt{15}-2 \ee^9 \left(\cos \sqrt{3}+\cosh \sqrt{21}\right)}{8 \ee^4}\e{26}
\notag \\
&\textstyle
+\frac{35-35 \ee^6+14 \sqrt{15} \ee^3 \sin \sqrt{15}+5 \sqrt{3} \ee^9 \left(7 \sin
   \sqrt{3}-\sqrt{7} \sinh \sqrt{21}\right)}{210 \ee^4} \e{34}
+\frac{-175+175 \ee^6-14 \sqrt{15} \ee^3
   \sin \sqrt{15}+10 \sqrt{3} \ee^9 \left(7 \sin \sqrt{3}-5 \sqrt{7} \sinh \sqrt{21}\right)}{840 \ee^4} \e{145}
\notag \\
&\textstyle
+\frac{-175+175 \ee^6+14 \sqrt{15} \ee^3 \sin \sqrt{15}-10 \sqrt{3} \ee^9 \left(7 \sin \sqrt{3}+5 \sqrt{7} \sinh
   \sqrt{21}\right)}{840 \ee^4}\e{246}
-\frac{1+\ee^6-2 \ee^3 \cos \sqrt{15}+2 \ee^9
   \left(\cosh\sqrt{21}-\cos \sqrt{3}\right)}{8 \ee^4}\e{1256}
\notag  \\
&\textstyle
+\frac{-35+35 \ee^6+14 \sqrt{15} \ee^3 \sin \sqrt{15}-5
   \sqrt{3} \ee^9 \left(7 \sin \sqrt{3}+\sqrt{7} \sinh \sqrt{21}\right)}{210 \ee^4}\e{1345}
+\frac{-35+35 \ee^6-14 \sqrt{15} \ee^3 \sin\sqrt{15}+5 \sqrt{3} \ee^9 \left(7 \sin \sqrt{3}-\sqrt{7} \sinh\sqrt{21}\right)}{210
   \ee^4} \e{2346}
\notag  \\
&\textstyle
  +\frac{175-175 \ee^6+14 \sqrt{15} \ee^3 \sin \sqrt{15}+10 \sqrt{3} \ee^9 \left(7 \sin
   \sqrt{3}-5 \sqrt{7} \sinh\sqrt{21}\right)}{840 \ee^4} \e{12456}
  +\frac{-35+35 e^6+14 \sqrt{15} \ee^3 \sin \sqrt{15}+5 \sqrt{3} \ee^9 \bigl(7 \sin\sqrt{3}+\sqrt{7} \sinh
   \sqrt{21}\bigr)}{210 \ee^4} \e{123456}
 \, .\notag
\end{align}
Coefficients at basis elements include both trigonometric and hyperbolic functions.

\subsection{Exponential in basis-free form}

The basis-free exponential follows  from
Eq.~\eqref{expNcomplexCoord} after summation  over the
multi-index~$J$.

\begin{theorem}[MV exponential in basis-free form]\label{theorem2}
In $\cl{p}{q}$ algebra  the exponential of a general MV $\m{A}$
(see Eq.~\eqref{mvA}) can be computed by the following formulas
\begin{align}
\exp(\m{A})=
  & \sum_{i=1}^{d} \exp(\lambda_i)\,\beta(\lambda_i)
  \sum_{m=0}^{d-1} \biggl(\sum_{k=0}^{d-m-1}\lambda_i^k C_{(d-k-m-1)}(\m{A})\biggr)\, \m{A}^{m}\label{expNcomplexCoordFree}
  \\=
  &
  \sum_{i=1}^{d} \exp(\lambda_i)\Biggl(\frac{1}{d}+\beta(\lambda_i)\,
  \sum_{m=0}^{d-2} \biggl(\sum_{k=0}^{d-m-2}\lambda_i^k C_{(d-k-m-2)}(\m{A})\biggr)\, \frac{\bigl(\m{A}^{m+1}-\overline{\m{A}^{m+1}}\bigr)}{2}
 \Biggr)\label{expNcomplexCoordFreeWithDim}\\=
  &  \sum_{i=1}^{d} \exp\bigl(\lambda_i\bigr)\Bigl(\frac{1}{d}+\beta(\lambda_i)\m{B}(\lambda_i)\Bigr),\qquad\mathrm{with}\qquad
  \beta(\lambda_i)=\frac{1}{\sum_{j=0}^{d-1}(j+1)\,C_{(d -j-1)}(\m{A})\, \lambda_i^{j}}\label{beta} \,.
\end{align}
The expression
$\frac{1}{2}\bigl(\m{A}^{m+1}-\overline{\m{A}^{m+1}}\bigr)\equiv
\langle\m{A}^{m+1}\rangle_{-0}$ indicates that all grades of
  multivector $\m{A}^{m+1}$ are included except grade-$0$, since
the scalar part is simply a sum of exponents of eigenvalues divided by $d$.
\end{theorem}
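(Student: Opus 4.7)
The plan is to derive all three basis-free forms directly from the coordinate expansion \eqref{expNcomplexCoord} of Theorem~1 by executing the sum over the multi-index $J$. The only non-trivial ingredient is a basis-completeness identity: for every $X\in\cl{p}{q}$,
\begin{equation*}
\sum_{J}^{2^{n}-1}\e{J}\,C_{(1)}(\e{J}^{\dagger}X)\;=\;d\bigl(X-\langle X\rangle_{0}\bigr)\;=\;\tfrac{d}{2}\bigl(X-\overline{X}\bigr).
\end{equation*}
This follows from $C_{(1)}(Y)=d\langle Y\rangle_{0}$ together with $\e{J}^{\dagger}=\e{J}^{-1}$ (footnote to \eqref{hermitian}), which makes $\{\e{J}^{\dagger}\}$ the dual family of $\{\e{J}\}$, i.e.\ $\langle\e{J}^{\dagger}\e{K}\rangle_{0}=\delta_{JK}$; hence the $J$-th coefficient of $X$ is $\langle\e{J}^{\dagger}X\rangle_{0}$, and reassembling these coefficients produces the non-scalar part of $X$, which by the definition of the overline in Table~\ref{tableDet} equals $(X-\overline{X})/2$.

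Next, I apply this identity to the inner $J$-sum of \eqref{expNcomplexCoord} with $X=\m{A}^{d-k-m-1}$. The prefactor $1/d$ cancels the $d/2$ supplied by the identity, and the denominator $\sum_{k=0}^{d-1}(k+1)\,C_{(d-k-1)}(\m{A})\,\lambda_{i}^{k}$ is recognised as $\chi'_{\m{A}}(\lambda_{i})=\beta(\lambda_{i})^{-1}$ by direct differentiation of \eqref{CharPolyDef}. A change of the double-summation index via $m'=d-k-m-1$ (so that $m'$ runs from $1$ to $d-1$) groups the surviving terms according to the power $\m{A}^{m'}-\overline{\m{A}^{m'}}$, producing precisely the shape of \eqref{expNcomplexCoordFreeWithDim}; the constant $1/d$ per eigenvalue comes from the untouched $1$ inside the brackets of \eqref{expNcomplexCoord}.

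Finally, to collapse \eqref{expNcomplexCoordFreeWithDim} into the compact form \eqref{expNcomplexCoordFree}, the key observation is that the $m$-sum coincides with the polynomial quotient $\chi_{\m{A}}(\lambda)/(\lambda-\lambda_{i})$ evaluated at $\lambda=\m{A}$. Polynomial long division of \eqref{CharPolyDef} yields
\begin{equation*}
\frac{\chi_{\m{A}}(\lambda)}{\lambda-\lambda_{i}}\;=\;\sum_{m=0}^{d-1}\lambda^{m}\sum_{k=0}^{d-m-1}C_{(d-k-m-1)}(\m{A})\,\lambda_{i}^{k},
\end{equation*}
and, together with Cayley--Hamilton, this identifies $\beta(\lambda_{i})$ times the bracketed polynomial of \eqref{expNcomplexCoordFree} with the Lagrange spectral projector $P_{i}=\prod_{j\ne i}(\m{A}-\lambda_{j})/\prod_{j\ne i}(\lambda_{i}-\lambda_{j})$ onto the $\lambda_{i}$-eigenspace; the trace identity $\Tr(P_{i})=1$, i.e.\ $\langle P_{i}\rangle_{0}=1/d$, then makes the scalar/non-scalar split in \eqref{expNcomplexCoordFreeWithDim} consistent with $\exp(\m{A})=\sum_{i}\ee^{\lambda_{i}}P_{i}$. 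The principal obstacle will be the careful bookkeeping of the double-sum reindexing that repackages $\lambda_{i}^{m}C_{(k)}(\m{A})$ into $\lambda_{i}^{k}C_{(d-k-m'-1)}(\m{A})$ with $m'$ attached to powers of $\m{A}$; a secondary subtlety worth addressing is the case of repeated roots, where $\beta(\lambda_{i})$ is singular and the expression must be read as a limit, or equivalently rewritten via the minimal polynomial of Section~\ref{sec:charpoly}.
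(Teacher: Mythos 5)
Your algebra is essentially sound: the completeness identity $\sum_{J}\e{J}\,C_{(1)}(\e{J}^{\dagger}X)=d\bigl(X-\langle X\rangle_{0}\bigr)$ is correct (since $\e{J}^{\dagger}=\e{J}^{-1}$ gives $\langle\e{J}^{\dagger}\e{K}\rangle_{0}=\delta_{JK}$), the reindexing $m'=d-k-m-1$ really does turn the double sum of \eqref{expNcomplexCoord} into that of \eqref{expNcomplexCoordFreeWithDim}, the denominator is indeed $\chi'_{\m{A}}(\lambda_i)=\beta(\lambda_i)^{-1}$, and the observation that $\beta(\lambda_i)\,\chi_{\m{A}}(\lambda)/(\lambda-\lambda_i)\big|_{\lambda=\m{A}}$ is the Lagrange projector with scalar part $1/d$ correctly reconciles \eqref{expNcomplexCoordFree} with \eqref{expNcomplexCoordFreeWithDim} when the roots are simple. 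The problem is what this establishes: you have proved the \emph{equivalence} of the coordinate form \eqref{expNcomplexCoord} and the basis-free forms, not Theorem~\ref{theorem2} itself. Your argument takes Theorem~1 as valid for all $n$, but the paper's proof of Theorem~1 is only an explicit symbolic check for $n\le 5$ (numerical up to $n\le 10$) and explicitly defers the general-$n$ case to the basis-free formula \eqref{expNcomplexCoordFree}. Within the paper's logical structure your derivation is therefore circular: Theorem~\ref{theorem2} is the statement that must carry the general-$n$ burden, so it cannot inherit its validity from Theorem~1.

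The paper's own proof is self-contained and runs in the opposite direction: it verifies the defining property \eqref{definingProperty} directly on the basis-free expression, using the scaling property \eqref{propertyDC} (so $\m{A}\to\m{A}t$ sends $\lambda_i\to\lambda_i t$ and differentiation at $t=1$ replaces $\exp(\lambda_i)$ by $\lambda_i\exp(\lambda_i)$), and then shows that $\partial_t\exp(\m{A}t)|_{t=1}-\m{A}\exp(\m{A})$ vanishes root by root because both $\m{A}$ and each $\lambda_i$ satisfy the characteristic (Cayley--Hamilton) relations following from \eqref{FLAlg}. To make your route a genuine proof you would either need an independent general-$n$ proof of Theorem~1 (after which your $J$-summation is a clean way to obtain Theorem~\ref{theorem2}), or you must verify directly that your right-hand side satisfies a defining property of the exponential --- e.g.\ by upgrading your projector remark into the Lagrange--Sylvester argument that $\sum_i e^{\lambda_i}P_i=\exp(\m{A})$ for a diagonalizable $\m{A}$ with simple spectrum, which would in fact be a legitimate alternative to the paper's ODE check. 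Your closing caveat about repeated roots (reading $\beta(\lambda_i)$ as a limit or passing to the minimal polynomial) is fair, and is a limitation the paper itself handles only by example.
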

The form of  exponential of MV in Theorem
\eqref{expNcomplexCoordFree} has some similarity with exponential
of square  matrix~\cite{Fujii2012}, where the characteristic
polynomial was used for this purpose too.
\begin{proof}
We will prove the basis-free formula (Theorem~\ref{theorem2}) by
checking the defining equation for the exponential~\eqref{expNcomplexCoordFree},
\begin{equation}\label{definingProperty}
\left.\frac{\partial\exp(\m{A}t)}{\partial t}\right|_{t=1} = \m{A}
\exp(\m{A})=\exp(\m{A}) \m{A}, \end{equation} where $\m{A}$ is
independent of scalar parameter $t$. Since  the MV commutes with
itself, the multiplications of exponential from left and right by
$\m{A}$ gives the same result. Below, after differentiation with
respect to $t$ and then setting $t=1$, we will verify that the
result indeed is $\m{A} \exp(\m{A})$.

  First, using properties of characteristic coefficients \eqref{propertyDC} and
noting that replacement $\m{A}\to\m{A}t$ implies $\lambda_i\to
\lambda_i t$ and performing differentiation
$\left.\frac{\partial\exp(\m{A}t)}{\partial t}\right|_{t=1}$ we
obtain that $\exp(\lambda_i)$  of right hand side of
\eqref{expNcomplexCoordFree} (and also of \eqref{expNcomplexCoord})
is replaced by $\lambda_i\exp(\lambda_i)$,
\begin{align}
  \left.\frac{\partial\exp(\m{A}t)}{\partial t}\right|_{t=1}=&\sum_{i=1}^{d} \lambda_i\exp(\lambda_i)\beta(\lambda_i)\,
  \sum_{m=0}^{d-1} \biggl(\sum_{k=0}^{d-m-1}\lambda_i^k C_{(d-k-m-1)}(\m{A})\biggr)\, \m{A}^{m},
\end{align}
  where the weight factor $\beta(\lambda_i)$ does not play any role in the proof.
Next, we multiply basis-free formula \eqref{expNcomplexCoordFree}
by $\m{A}$
\begin{align}
\m{A}\exp(\m{A})=\sum_{i=1}^{d} \exp(\lambda_i)\,\beta(\lambda_i)
 \sum_{m=0}^{d-1} \biggl(\sum_{k=0}^{d-m-1}\lambda_i^k C_{(d-k-m-1)}(\m{A})\biggr)\, \m{A}^{m+1}\,,
\end{align}
and subtract the second equation from the first for each fixed
root $\lambda_i$, i.e. temporary ignore summation over roots,
\begin{align}\label{difFormula}
  &  \left.\biggl(\left.\frac{\partial\exp(\m{A}t)}{\partial t}\right|_{t=1}- \m{A}\exp(\m{A})\biggr)\right|_{\lambda_i}=
  \exp(\lambda_i)\, \beta(\lambda_i)
  \Bigl(
  \sum_{k=1}^{d} \lambda_i^k C_{(d-k)}(\m{A})-\m{A}^{k} C_{(d-k)}(\m{A})
  \Bigr)\notag\\
  &\qquad  =\exp(\lambda_i) \,\beta(\lambda_i)\Bigl(
  \bigr(\lambda_i^d -\m{A}^{d}\bigr) C_{(0)}(\m{A})+\bigr(\lambda_i^{d-1} -\m{A}^{d-1}\bigr) C_{(1)}(\m{A})+\cdots + \bigr(\lambda_i -\m{A}\bigr) C_{(d-1)}(\m{A})
  \Bigr)\,.
\end{align}
  From Cayley-Hamilton relations (which follow from algorithm~\eqref{FLAlg})
\begin{align*}
  \sum_{k=0}^{d}  \m{A}^k C_{(d-k)}(\m{A})=\m{A}^{d}C_{(0)}(\m{A})+\m{A}^{d-1}C_{(1)}(\m{A})+\cdots + C_{(d)}(\m{A})= &0\notag,\\
  \sum_{k=0}^{d}  \lambda_i^k C_{(d-k)}(\m{A})=  \lambda_i^d C_{(0)}(\m{A})+\lambda_i^{d-1}C_{(1)}(\m{A})+\cdots +
C_{(d)}(\m{A})=&0,
\end{align*}
we solve for the highest powers $\m{A}^{d}$ and $\lambda_i^d$ and
after substituting them into the difference
formula~\eqref{difFormula} after expansion we obtain zero. Since
the identity holds for each of roots $\lambda_i$, it is true for a
sum over roots as well.
\end{proof}

 \textbf{Example 3.} {\it Exponential of MV in \cl{4}{0} with (multiple) zero eigenvalues.}
Let's compute the  exponential of
$\m{A}=-4-\e{1}-\e{2}-\e{3}-\e{4}-2 \sqrt{3} \e{1234}$ with
basis-free formula~\eqref{expNcomplexCoordFreeWithDim}. Using
Table~\ref{tableDet} one can easily verify that $\Det(\m{A})=0$.
For algebra $\cl{4}{0}$ we find $d=4$. The characteristic
polynomial is $\chi_{\m{A}}(\lambda)=C_{(4)}(\m{A})+C_{(3)}(\m{A})
\lambda+C_{(2)}(\m{A})\lambda^2+C_{(1)}(\m{A})
\lambda^3+C_{(0)}(\m{A})\lambda^4=
 -64 \lambda^2-16 \lambda^3-\lambda^4 = -\lambda^2 (8+\lambda)^2$.
The roots are $\lambda_1=0,
\lambda_2=0, \lambda_3=-8, \lambda_4=-8$. Since  multiple roots appear we have to compute
minimal polynomial (see Appendix~\ref{sec:appendix}) of $\m{A}$,
which is $\mu_{\m{A}}(\lambda)=\lambda (8 + \lambda)$. Since
minimal MV has only linear factors, i.e. the polynomial is square
free, the MV is diagonalizable, and the formula \eqref{expNcomplexCoordFreeWithDim} can be applied without modification. It is
also easy to verify that the minimal polynomial divides the
characteristic polynomial
$\chi_{\m{A}}(\lambda)/\mu_{\m{A}}(\lambda)=\frac{-\lambda^2
(8+\lambda)^2}{\lambda (8 + \lambda)}=-\lambda (8 + \lambda)$.
This confirms that non repeating roots of characteristic
polynomial provide sufficient but not necessary criterion of
diagonalizability.
 Then we have
\begin{equation}\label{example3Bi}
\begin{split}
  \beta(\lambda_i)\m{B}(\lambda_i)=& \frac{1}{\sum_{j=0}^{d-1}(j+1)\,C_{(d -j-1)}(\m{A})\, \lambda_i^{j}}\,
  \sum_{m=0}^{d-2} \sum_{k=0}^{d-m-2}\lambda_i^k C_{(d-k-m-2)}(\m{A})\, \langle\m{A}^{m+1}\rangle_{-0}\\
  =&
 \frac{8+\lambda_i}{4 \lambda_i (4+\lambda_i)}\langle\m{A}\rangle_{-0}+ \frac{16+\lambda_i}{4 \lambda_i (4+\lambda_i) (8+\lambda_i) }\langle\m{A}^{2}\rangle_{-0}+\frac{1}{4 \lambda_i (4+\lambda_i) (8+\lambda_i) }\langle\m{A}^{3}\rangle_{-0}
  \\
  =&-\frac{1}{\lambda_i +4}-\frac{1}{4 (\lambda_i +4)}\e{1}-\frac{1}{4 (\lambda_i +4)}\e{2}-\frac{1}{4 (\lambda_i +4)}\e{3}-\frac{1}{4 (\lambda_i +4)}\e{4}-\frac{\sqrt{3}}{2 \lambda_i +8}\e{1234}\/.
\end{split}
\end{equation}
From the middle line one may suppose that sum over roots would yield division by zero
due to zero denominators. The last line, however, demonstrates
that this is not the case, since after collecting terms at basis
elements we see that all potential zeroes in the denominators have
been cancelled. Unfortunately the cancellation does not occur for
non-diagonalizable MVs (see next example). Lastly, after
performing summation $\sum_{i=1}^{d}
\exp(\lambda_i)\bigl(\frac{1}{d}+\beta(\lambda_i)\m{B}(\lambda_i)\bigr)$
over complete set of roots
$\{\lambda_1,\lambda_2,\lambda_3,\lambda_4\}=\{0,0,-8,-8\}$ with
exponent weight factor $\exp(\lambda_i)$ (which can be replaced by
any other function or transformation, see
Sec.~\ref{sec:otherfunctions}) we obtain
\begin{align*}
  \exp(\m{A})= & \frac{1+\ee^8}{2 \ee^8}+
\frac{1-\ee^8}{8 \ee^8}\big(\e{1}+ \e{2}+\e{3}+
   \e{4}-2\sqrt{3}\e{1234}\,\big).
\end{align*}

 \textbf{Example 4.} {\it Exponential of non-diagonalizable
MV in \cl{3}{0}.} Let's find the  exponential of  $\m{A}=-1+2
\e{1}+\e{2}+2 \e{3}-2 \e{12}-2 \e{13}+\e{23}-\e{123}$ with the
help of  base-free formula~\eqref{expNcomplexCoordFreeWithDim}.
For algebra $\cl{3}{0}$ we have $d=4$. The minimal polynomial  is
$\mu_{\m{A}}(\lambda)=-(2+2 \lambda+\lambda^2)^2$ which coincides
with characteristic polynomial
$\chi_{\m{A}}(\lambda)=-\Det(\lambda-\m{A})$ and has multiple
roots $\{-(1+\ii),-(1+\ii),-(1-\ii),-(1-\ii)\}$. Now, if we
proceed as in Example~3 then  for some basis elements we will get
division by zero. To avoid this, we will add a small element  to
exponent, $\m{A} + \varepsilon \e{1}=\m{A}^\prime$, and after
exponentiation and simplification will compute  a limiting
value when $\varepsilon\to 0$. The infinitesimal element
$\varepsilon \e{1}$ may be replaced by any other one which does
not belong to algebra center. We find that
$\chi_{\m{A}^\prime}(\lambda)= -\lambda ^4-4 \lambda ^3+(2
(\varepsilon -4) \varepsilon -8) \lambda ^2+(4 (\varepsilon -6)
\varepsilon -8) \lambda -\varepsilon (\varepsilon ((\varepsilon
-8) \varepsilon +20)+8)-4$, the limit of which is
$\lim_{\varepsilon\to 0 }
\chi_{\m{A}^\prime}(\lambda)=\chi_{\m{A}}(\lambda)$. If
$\varepsilon$ is included it has four (now different) roots
$\lambda_1=-(1+i)-\sqrt{\varepsilon ^2-(4+2 i) \varepsilon }$,
$\lambda_2=-(1+i)+\sqrt{\varepsilon ^2-(4+2 i) \varepsilon }$,
$\lambda_3=-(1-i)-\sqrt{\varepsilon ^2-(4-2 i) \varepsilon }$,
$\lambda_4=-(1-i)+\sqrt{\varepsilon ^2-(4-2 i) \varepsilon }$
which in the limit $\varepsilon\to 0$ return back to  multiple
roots. Since the roots with $\varepsilon$ included  are
different in calculation of $\beta(\lambda_i)\m{B}(\lambda_i)$ the
division by zero disappears,
\begin{equation*}
\begin{split}
  \beta(\lambda_i)\m{B}(\lambda_i)= &\frac{ \left(-2
   \varepsilon ^2-8 \varepsilon +\lambda_i ^2+4 \lambda_i +8\right)\langle\m{A}^{\prime}\rangle_{-0}}{4 \left(-\varepsilon ^2 \lambda_i -\varepsilon
   ^2-4 \varepsilon  \lambda_i -6 \varepsilon +\lambda_i ^3+3 \lambda_i ^2+4 \lambda_i +2\right)}
    +\frac{ (\lambda_i +4)\langle\m{A}^{\prime 2}\rangle_{-0}}{4 \left(-\varepsilon ^2 \lambda_i -\varepsilon ^2-4
   \varepsilon  \lambda_i -6 \varepsilon +\lambda_i ^3+3 \lambda_i ^2+4 \lambda_i +2\right)}\\
  &\qquad+\frac{\langle\m{A}^{\prime 3}\rangle_{-0}}{4 \left(-\varepsilon ^2 \lambda_i -\varepsilon ^2-4 \varepsilon  \lambda_i -6 \varepsilon +\lambda_i ^3+3
   \lambda_i ^2+4 \lambda_i +2\right)}\\
  =&\frac{1}{4}\biggl(1+\frac{1}{\lambda_i ^3+3 \lambda_i ^2+\bigl(4-\varepsilon  (\varepsilon +4)\bigr) \lambda_i + 2 -\varepsilon (\varepsilon +6)}\biggl(
 \bigl(
(\varepsilon +2) \lambda_i ^2+2 (\varepsilon +3) \lambda_i-\varepsilon  (10+\varepsilon
   (\varepsilon +6))+2\bigr) \e{1}\\
&\qquad\qquad  +\bigl(\lambda_i^2+6\lambda_i-\varepsilon  (\varepsilon +8)+4\bigr) \e{2}  + 2\bigl(\lambda_i ^2-\varepsilon  (\varepsilon +2)-2\bigr) \e{3}
  + 2\bigl(-\lambda_i^2-4\lambda_i+\varepsilon (\varepsilon +6)-2\bigr) \e{12}\\
&\qquad\qquad +2\bigl(-\lambda_i^2 -\lambda_i +\varepsilon  (\varepsilon +3)+1\bigr) \e{13}
+ \bigl(\lambda_i ^2-2 (\varepsilon +1) \lambda_i +(\varepsilon -2) \varepsilon -4\bigr) \e{23}\\
  &\qquad\qquad- \bigl(\lambda_i ^2-2 (\varepsilon -1)\lambda_i +\varepsilon  (\varepsilon +2)+2\bigr) \e{123}
  \biggr)\biggr)\,.
\end{split}
\end{equation*}
After summation over all roots $\{\lambda_1,\lambda_2,
\lambda_3,\lambda_4\}$  in $\sum_{i=1}^{4}
\exp\bigl(\lambda_i\bigr)\Bigl(\frac{1}{4}+\m{B}(\lambda_i)\Bigr)$,
we collect terms at basis elements and finally compute the limit
$\varepsilon\to 0$ for each of coefficients. Then, after
simplification we get the following answer,
\begin{equation*}
\begin{split}
\exp(\m{A})= &  \frac{1}{\ee}\bigl(\cos (1) +(\sin (1)+2 \cos (1))\e{1} + (2 \sin (1)+\cos (1))\e{2}+2  (\cos (1)-\sin (1))\e{3}
    -2 (\sin (1)+\cos (1))\e{12}\\
 &\qquad+ (\sin (1)-2 \cos (1))\e{13}+ (\cos (1)-2 \sin (1))\e{23}-\sin (1) \e{123}\bigr)\,.
\end{split}
 \end{equation*}

It should be noted that  computation of the limit is highly
nontrivial task, especially when dealing with the roots of high
degree polynomial equations. The primary purpose of Example~4 was
to show that non-diagonalizable matrices/MVs represent some
limiting case and the (symbolic) formula is able to  take into
account this case. To illustrate how complicated computation of
exponential of non-diagonalizable matrix for higher dimensional
Clifford algebras could be we have tested internal
\textit{Mathematica} command MatrixExp[\,] using $\cl{4}{2}$
algebra and non-diagonalizable MV $\m{A}^{\prime\prime}=
-1-\e{3}+\e{6}-\e{12}-\e{13}+\e{15}-\e{24}-\e{25}+\e{26}-\e{34}-\e{35}+\e{36}-\e{45}+\e{56}+\e{123}+\e{124}+\e{126}+\e{134}+\e{135}+\e{136}+\e{146}+\e{234}-\e{235}-\e{236}-\e{245}-\e{246}-\e{256}+\e{456}
-\e{1236}+\e{1245}-\e{1246}+\e{1256}-\e{1345}-\e{1346}-\e{1356}+\e{1456}-\e{2346}-\e{2356}+\e{2456}+\e{3456}+\e{12345}-\e{12346}+\e{12356}
$ that was converted to $8\times 8$ real matrix
 representation. The respective MV has minimal polynomial
$(\lambda -1)^2 \left(\lambda ^6+10 \lambda ^5+39 \lambda ^4+124
\lambda ^3+543 \lambda ^2-198 \lambda -4743\right)$.
\textit{Mathematica} (version 13.0) has crashed after almost $48$
hours  of computation after  all 96~GB of RAM was exhausted. This
strongly contrasts with the exponentiation of diagonalizable
matrix of the same \cl{4}{2} algebra: it took only a fraction of a
second to complete the task.

\subsection{Making the answer real}
\label{realanswer}
Formulas~\eqref{expNcomplexCoord} and~\eqref{expNcomplexCoordFree}
include summation over (in general complex valued) roots of
characteristic polynomial, therefore, formally  the result is a
complex number. Here  we are dealing with real Clifford algebras,
consequently, a pure imaginary part or numbers in the final result
must vanish. Because the characteristic polynomial is made up of
real coefficients, the roots of the polynomial always come in
complex conjugate pairs. Thus, after summation over each of
complex root pair in  exponentials ~\eqref{expNcomplexCoord}
and~\eqref{expNcomplexCoordFree} (and other real valued) functions
one must get a real final answer. Indeed, assuming that symbols
$a,b,c,d,g,h$ are real and computing the sum over a single complex
conjugate root pair we come to the following relation
\begin{align*}
  \exp(a + \ii b) \frac{c + \ii d}{g + \ii h} + \exp(a - \ii b) \frac{c - \ii d}{g - \ii h}=
  \frac{2 \ee^{a} \bigl((c g+d h) \cos b+(c h-d g) \sin
b\bigr)}{g^2+h^2},
\end{align*}
the right hand side of which now formally represents a real number as
expected. The left hand side is exactly the expression which we
have in~\eqref{expNcomplexCoord} and~\eqref{expNcomplexCoordFree}
formulas after substitution of pair of complex conjugate roots.
However, from symbolic computation point of view the issue is not
so simple. In general, the roots of high degree ($d\ge 5$)
polynomial equations cannot be solved in radicals and, therefore,
in symbolic packages they are represented as enumerated formal
functions/algorithms of some irreducible polynomials.  In
\textit{Mathematica} the formal solution is represented as
Root[poly,\,k]. In order to obtain a real answer, we have to know
how to manipulate  these formal objects algebraically. To
that end there exist algorithms which allow to rewrite the
coefficients of irreducible polynomials 'poly' after they have
been algebraically manipulated. The operation, however, appears to
be nontrivial and time consuming. In \textit{Mathematica} it is
implemented by RootReduce[\,] command, which produces another
Root[poly$^\prime$,\,k$^\prime$] object. Such a  root reduction
typically raises the order of the  polynomial. From pure numerical
point of view, of course, we may safely  remove spurious complex
part in the final answer to get real numerical value.

\section{Elementary functions of MV}
\label{sec:otherfunctions} The exponential
formulas~\eqref{expNcomplexCoord} and~\eqref{expNcomplexCoordFree}
are more universal than we have expected. In fact they allow to
compute any function and transformation of MV (at least for
diagonalizable MV) if one replaces the exponential weight
$\exp(\lambda_i)$ by any other function (and allow to use complex
numbers). Here we shall demonstrate how to compute $\log (\m{A}),
\sinh (\m{A})$, $\arcsinh (\m{A})$ and Bessel $J_0(\m{A})$ GA
functions of MV of argument $\m{A}$  in $\cl{4}{0}$
(Example~3). This example with zero and negative
eigenvalues was chosen  to demonstrate that no problems arise if
symbolic manipulations are addressed.

After replacement of  $\exp(\lambda_i)$ by  $\log(\lambda_i)$
in~\eqref{expNcomplexCoordFree} and summing up over all roots one
obtains
\begin{equation}\label{logfromexp}
\begin{split}
  \log{\m{A}}=\frac{1}{2}(\log (0_{+})
  +\log (-8))+\frac{1}{8} (\log (-8)-\log (0_{+})) \left(\e{1}+\e{2}+\e{3}+\e{4}+2 \sqrt{3}\,\e{1234}\right).
\end{split}
 \end{equation}
We shall not attempt  to explain what $\log (-8)$ means in
\cl{4}{0} since we want to avoid presence of complex numbers in
real $\cl{4}{0}$. If we assume, however, that $\exp\bigl(\log
(-8)\bigr)=-8$ and $\exp\bigl(\log (0_{+})\bigr)=\lim_{x\to
0_{+}}\exp\bigl(\log (x)\bigr)=0$. Then it is easy to check that
under these assumptions the  exponentiation of $\log{\m{A}}$
yields $\exp(\log(\m{A}))=\m{A}$, i.e., the $\log$ function in
Eq.~\eqref{logfromexp} is formal inverse of $\exp$.

There are no problems when computing hyperbolic and trigonometric
functions and their inverses\footnote{It looks as if the complex
numbers are inevitable in computing  trigonometric functions in
most of real Clifford algebras, except $\cl{3}{0}$ as well as few
others~\cite{Chappell2014}.}. Indeed, after replacing
$\exp(\lambda_i)$ by  $\sinh(\lambda_i)$, and $\arcsinh(\lambda_i)$
in~\eqref{expNcomplexCoordFree} one finds, respectively,
\begin{equation}
\begin{split}
  \sinh{\m{A}}=&\frac{1}{8}\sinh(8)\bigl(-4-\e{1}- \e{2}-\e{3}-\e{4}-2 \sqrt{3}\,\e{1234}\bigr),\\
  \arcsinh{\m{A}}=&\frac{1}{8}\arcsinh(8)\bigl(-4 -\e{1} - \e{2}-\e{3}-\e{4} -2 \sqrt{3}\, \e{1234} \bigr),\\
  J_0(\m{A})=& \frac{1}{2} (1+J_0(8))+
 \frac{1}{8} (J_0(8)-1) \bigl(\e{1}+\e{2}+\e{3}+\e{4}+2\sqrt{3}\,\e{1234} \bigr) \,.
\end{split}
\end{equation}
In the last line $\exp(\lambda_i)$ was replaced by Bessel
$J_0(\m{A})$ function. It is easy to check that
$\sinh\bigl(\arcsinh(\m{A})\bigr)=\m{A}$ is indeed satisfied. We
do not question where special functions of MV/matrix
argument may find application. The purpose of last
computation was to demonstrate that the
formulas~\eqref{expNcomplexCoord} and~\eqref{expNcomplexCoordFree}
are more universal: they allow to compute much larger class
of functions and transformations of MV, because the sum operator
in the formulas satisfies the properties,
\begin{equation}
  \sum_{i=1}^{d}\beta(\lambda_i)\m{B}(\lambda_i)=0,\qquad \sum_{i=1}^{d} \lambda_i\Bigl(\frac{1}{d}+\beta(\lambda_i)\m{B}(\lambda_i)\Bigr)=\m{A}\,,
\end{equation}
where the scalar $\beta(\lambda_i)$ is given in formula
\eqref{beta}. These expressions  provide interesting spectral
decomposition of MV.

\section{Conclusion}
\label{sec:conlusion} The paper shows that  in Clifford geometric
algebras the exponential of a general multivector is associated
with the characteristic polynomial of the multivector (exponent)
and may be expressed in terms of roots of respective
characteristic equation.  In higher dimensional algebras the
coefficients at basis elements, in agreement
with~\cite{AcusDargys2022b}, include a mixture of trigonometric
and hyperbolic functions. The presented GA exponential
formulas~\eqref{expNcomplexCoord} and~\eqref{expNcomplexCoordFree}
can be generalized to trigonometric, hyperbolic and other
elementary functions as well. Besides the explicit examples of functions provided in the article, we were able to compute fractional powers of MV, many special functions available in Mathematica, in particular, HermiteH[\,],LaguerreL[\,] (also with rational parameters), and some of hypergeometric functions.

\section{Appendix: Minimal polynomial of MV}
\label{sec:appendix}

A simple algorithm for computation of matrix minimal polynomial is
given in \cite{Mathworld2022}. It starts by constructing $d\times
d$ matrix $M$ and its powers $\{1, M, M^2,\ldots\}$ and then
converting each of matrix into vector of length $d\times d$. The
algorithm then checks consequently  the sublists $\{1\}$, $\{1,
M\}$, $\{1, M, M^2\}$ etc until the vectors in a particular
sublist are detected to be linearly dependent. Once  linear
dependence is established the algorithm returns polynomial
equation, where coefficients of linear combination are multiplied
by corresponding variable $\lambda^i$.

In GA, the orthonormal basis elements $\e{J}$ are linearly
independent, therefore it is enough to construct vectors made from
 coefficients of MV. Then, the algorithm  starts searching
when these coefficient vectors become linearly dependent.

A vector constructed of matrix representation of MV has
$d^2=\bigl(2^{\lceil\tfrac{n}{2}\rceil}\bigr)^2$ components. This
is exactly the number of coefficients ($2^n$) in MV for Clifford
algebras of even $n$ and twice less than number of matrix elements
$d\times d$ for odd $n$. The latter can be easily
understood if one remembers that for odd $n$ the matrix
representation of Clifford algebra has block diagonal form.
Therefore only a single block will suffice for
required matrix algorithm. Below the
Algorithm~\ref{minimalPoly} describes how to compute minimal
polynomial of MV without employing  a  matrix representation.
\begin{algorithm}
\SetAlgoLined \SetNoFillComment \LinesNotNumbered
\SetKwInput{KwInput}{Input}
\SetKwInput{KwOutput}{Output}
\SetKwProg{minimalPoly}{minimalPoly}{}{}$\MinimalPoly{(\m{A})}${\\
\KwInput{multivector $\m{A}=a_0+\sum_J^{2^n-1}a_J\e{J}$ and polynomial variable $x$}
\KwOutput{minimal polynomial $c_1+c_2 x+c_3 x^2+\cdots$}
  {\scriptsize \tcc{Initialization}}
nullSpace=\{\};\quad lastProduct=1;\quad vectorList=\{\}\;
  {\scriptsize\tcc{keep adding new MV coefficient vectors to vectorList until null space becomes nontrivial}}
  While[nullSpace===\{\},\\
  lastProduct=$\m{A}\circ$lastProduct\;
  AppendTo[vectorList,\, ToCoefficientList[lastProduct]]\;
  nullSpace=NullSpace[Transpose[vectorList]];
  ]\\
  {\scriptsize\tcc{use null space weights to construct the polynomial $c_1+c_2 \m{A}+c_3 \m{A}^2+\cdots$, with $\m{A}$ replaced by given variable $x$}}\vskip 5pt
  \Return{$\mathrm{First[nullSpace]}\cdot \{x^0,x^1,x^2,\ldots,x^{\mathrm{Length[nullSpace]-1}}\}$}.
} \label{minimalPoly}\caption{Algorithm for finding minimal polynomial of
MV in $\cl{p}{q}$}
\end{algorithm}

All functions in the above code are internal {\it Mathematica}
functions, except symbol $\circ$ (geometric product) and
ToCoefficientList[\,] which  is rather simple. It takes MV
$\m{A}$ coefficients and outputs vector coefficients, i.e.
ToCoefficientList[$a_0+a_1\e{1}+a_2\e{2}+\cdots+a_I I$]$\to
\{a_0,a_1,a_2,\ldots,a_I \}$. The real job is done by {\it
Mathematica} function NullSpace[\,], which searches for linear
dependency of inserted vector list. This function is standard
function of linear algebra library. If the list of vectors is
found to be linearly dependent it outputs weight factors of
linear combination for which sum of vectors become zero, or
an empty list otherwise. The AppendTo[vectorList, newVector]
appends the newVector to list of already checked vectors in
vectorList.




Supporting information available as part of the online article:
\url{https://github.com/ArturasAcus/GeometricAlgebra}.

\bibliographystyle{REPORT}

\bibliography{expND}%

%
%

\end{document}